\theoremstyle{plain} 
\newtheorem{theorem}{Theorem}
\newtheorem{lemma}[theorem]{Lemma}
\newtheorem{corollary}[theorem]{Corollary}
\theoremstyle{definition}
\theoremstyle{remark}
\newtheorem{remark}[theorem]{Remark}
\newtheorem{example}[theorem]{Example}
\newcommand{\ttrue}{\mathit{true}}
\newcommand{\ffalse}{\mathit{false}}
\newcommand{\truef}{\textnormal{true}}
\newcommand{\adom}[1]{\ensuremath\textnormal{adom}(#1)}
\newcommand{\setqueries}[1]{\ensuremath #1}
\newcommand{\lc}[1]{\ensuremath \setqueries{#1}^{\subseteq}}
\newcommand{\lns}[1]{\ensuremath \setqueries{#1}^{\neq\emptyset}}
\newcommand{\MONO}{\ensuremath\textnormal{MON}}
\newcommand{\CON}{\ensuremath \textnormal{CQ}}
\newcommand{\SCON}{\ensuremath\CON_{\dbsc{}}}
\newcommand{\dbsc}{\ensuremath\Gamma}
\newcommand{\restr}[2]{#1|_{#2}}
 \title{A Monotone Preservation Result for Boolean Queries Expressed as a Containment of Conjunctive Queries}
\DeclareMathOperator{\dom}{adom}
\newcommand{\seq}[3]{\ensuremath ({#1}_{#2})_{#2 \in #3}}
\date{Hasselt University}
\author{Dimitri Surinx \and Jan Van den Bussche}
\begin{document}
\maketitle
\begin{abstract}
  When a relational database is queried, the result is normally a relation.
  Some queries, however, only require a yes/no answer; such
  queries are often called \emph{boolean queries}. It is customary in database theory
  to express boolean queries by testing nonemptiness of query expressions. Another interesting way for expressing boolean queries are containment statements of 
  the form $Q_1 \subseteq Q_2$ where $Q_1$ and $Q_2$ are query expressions. Here, for any input
  instance $I$, the boolean query result is $\ttrue$ if $Q_1(I)$ is a subset of $Q_2(I)$ and $\ffalse$ otherwise.

  In the present paper we will focus on nonemptiness and containment statements about conjunctive queries. The main goal is to investigate the monotone fragment of the containments of conjunctive queries. In particular, we show a preservation like result for this monotone fragment. That is, we show that, in expressive power, the monotone containments of conjunctive queries are exactly equal to conjunctive queries under nonemptiness.
\end{abstract}
\section{Introduction}
In this paper, we compare boolean queries (or integrity constraints) expressed using conjunctive
queries (CQs~\cite{ahv_book}) 
in two different ways: 
\begin{description}
  \item[Nonemptiness:] As an expression of the form $Q \neq \emptyset$, with $Q$ a CQ;
  \item[Containment:] As an expression of the form $Q_1\subseteq Q_2$, with $Q_1$ and $Q_2$ two CQs.
\end{description}

An example of a nonemptiness query is ``there exists a customer who bought a luxury product''. An
example of a containment query is ``every customer who bought a luxury product also bought a sports
product''. A qualitative difference between nonemptiness and containment queries is that
nonemptiness queries are always monotone: when the result is true on some input instance, it is also
true on any larger instance. In contrasts, containment queries need not be monotone, as shown by
the example above. The nonemptiness of a CQ is always expressible as the containment of two CQs. For example, the nonemptiness of $(x) \leftarrow \mathit{Customer}(x),
\mathit{Bought(x,y)}, \mathit{Luxury}(y)$ is expressed as
\[ ()\leftarrow \mathit{true}\quad \subseteq \quad () \leftarrow \mathit{Customer}(x),
\mathit{Bought(x,y)}, \mathit{Luxury}(y).\]
Conversely, one may suspect that, as far as \emph{monotone} queries are concerned, nothing more is
expressible by a containment of two CQs. Indeed, we show in this paper that every monotone query
expressed as the containment of two CQs is already expressible as the nonemptiness of a CQ.
Such a result fits the profile of a preservation theorem since it gives a syntactical language for a
semantical sublanguage. Preservation theorems have been studied intensively in model theory, finite model theory and database theory~\cite{gurevich_complexity,gurevich_monpos,changkeisler,stolboushkin_finmon,rossman_hom,benedikt_book}.

From our proof it also follows that monotonicity testing of a containment of two CQs is decidable;
specifically, the problem is 
NP-complete.
\section{Preliminaries}
A database schema $\dbsc$ is a finite nonempty set of relation names.
Every relation name $R$ is assigned an arity, which is a natural
number. Let $V$ be some fixed infinite universe of data elements and let $R$ be a relation name of
arity $n$. An \emph{$R$-fact} is an expression of the form $R(a_1,\ldots,a_n)$ where $a_i\in V$ for
$i=1,\ldots,n$. Generally, a \emph{fact} is an $R$-fact for
some $R$. An \emph{$R$-instance} $I$ is a finite set of $R$-facts. 
More generally, an instance $I$ of
a database schema $\dbsc$ is defined to be a nonempty union $\bigcup_{R \in \dbsc{}} I(R)$, where $I(R)$ is
an $R$-instance. The active domain of an instance $I$, denoted by $\adom{I}$, is the
set of all data elements from $V$ that occur in $I$. An instance $I$ is called $\emph{connected}$
when for every two data elements $a,b\in \adom{I}$ there is a sequence of facts $f_1,\ldots,f_n$ in $I$ such
that: $a$ is in $\adom{\{f_1\}}$, $b$ is in $\adom{\{f_n\}}$, and $\adom{\{f_i\}}\cap \adom{\{f_
{i+1}\}}\neq \emptyset$ for any $i=1,\ldots,n-1$. 
  An instance $J$ is a called a \emph{connected component} of $I$ if $J$
is connected, $J\subseteq I$ and $J$ is maximal in $I$ with respect to inclusion.

We have defined database and instances under the so called ``logic programming perspective''~
\cite{ahv_book}. We will define the results of conjunctive queries, however, under the so-called
``named'' perspective~\cite{ahv_book}. This will allow a lighter notation in our proof of Lemma~\ref{lem:contCQMONemptyheads} where we are taking subtuples of heads of conjunctive queries. 

 In the named perspective, \emph{tuples} are defined over a finite set
of attributes, which we refer to as a \emph{relation scheme}. Formally, \emph{tuples}, say $t = 
\seq{u}{i}{S}$ on a relation scheme $S$, are considered as mappings, so $t$ is a mapping on $S$ and
$t(i)=u_i$.  Then, subtuples, say $\restr{t}{K}$ for $K\subseteq S$ are treated as restrictions of
the mapping $H$ to $K$. On the empty relation scheme, there is only one tuple, namely the empty
mapping, also called the empty tuple. We denote the empty tuple by $()$.

We formalize the notion of conjunctive queries as follows. From the outset we assume an infinite universe of
variables. A \emph{conjunctive query} is an expression of the form $Q: H\leftarrow B$ where the head
$H$ is a tuple of variables (tuple in the sense as just defined), and the body $B$ is a set of atoms
over $\dbsc$. An \emph{atom} is an expression of the form $R(v_1,\ldots,v_n)$ where $R \in \dbsc$ and $v_1,\ldots,v_n$ are variables. We will denote the set of conjunctive queries over $\dbsc$ as $\SCON$. For a conjunctive query $Q$ we will write $H_{Q}$ for the head and $B_{Q}$ for the body of $Q$. The \emph{result scheme} of a conjunctive query $Q$ is the relation scheme of the head $H_Q$. Note that we allow \emph{unsafe} queries, i.e., queries with head variables that do not appear in the body. Semantically, for any instance $I$ over $\dbsc$, $Q(I)$ is defined as:
\[\{f\circ H_Q \mid \text{$f$ is a homomorphism from $Q$ into $I$}\}. \]
 
Here, a homomorphism $f$ from $Q$ into $I$ is a function on the variables in $H_Q$ and $B_{Q}$ to $
\adom{I}$ such that $f(B_Q)\subseteq I$. When the variables in $H_Q$ are all present in $B_{Q}$, we
will also write that $f$ is a homomorphism from $B_{Q}$ into $I$. Interchangeably, we will write that $B_{Q}$ maps into $I$. 

\begin{example}
  Consider the database schema with the relation name $\mathit{Flights}$ of arity two. The
  following conjunctive query returns all the city pairs that are connected by flight with 
  one stopover:
  \[(A : x, B : y) \leftarrow \mathit{Flight}(x,z),\mathit{Flights}(z,y).\]

  This query returns $\{(A : \mathit{Vienna},B :\mathit{Brussels}), (A : \mathit{Paris},B :
  \mathit{Rome})\}$ on the instance \[\{\mathit{Flights(Paris,Brussels)},\mathit{Flights(Brussels,Rome)}
  , \mathit{Flights(Vienna,Paris)}\}.\]
\end{example}

\begin{remark}
  It is convenient to assume that variables are data elements in $V$. Then, we can use the body of a conjunctive query as a database instance. As a consequence, an $R$-atom can then be thought of as an $R$-fact.
\end{remark}

For any two queries $Q_1$ and $Q_2$, we write $Q_1\sqsubseteq Q_2$ if $Q_1(I)\subseteq Q_2(I)$ for
any database instance $I$ over $\dbsc$. We recall:
\begin{theorem}[\cite{cm}]\label{theo:cm}
  Let $Q_1$ and $Q_2$ be conjunctive queries. Then, $Q_1\sqsubseteq Q_2$ iff $H_{Q_1}\in Q_2(B_{Q_1})$.
\end{theorem}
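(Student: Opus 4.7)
The plan is to prove both directions by exploiting the canonical-database trick, which is available here because the paper's remark lets us view $B_{Q_1}$ itself as an instance by treating variables as data elements. The forward direction then falls out by applying the containment at this canonical instance, and the backward direction follows by composing homomorphisms.

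For the forward direction, assume $Q_1 \sqsubseteq Q_2$. Viewed as an instance, $B_{Q_1}$ has every variable occurring in $B_{Q_1}$ in its active domain, so the identity map is a homomorphism from $Q_1$ into $B_{Q_1}$ and produces $H_{Q_1}$. Thus $H_{Q_1} \in Q_1(B_{Q_1})$, and the containment hypothesis instantiated at $B_{Q_1}$ yields $H_{Q_1} \in Q_2(B_{Q_1})$.

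For the backward direction, assume $H_{Q_1} \in Q_2(B_{Q_1})$ and fix a witnessing homomorphism $h$ from $Q_2$ into $B_{Q_1}$ with $h \circ H_{Q_2} = H_{Q_1}$. Given any instance $I$ and any $t \in Q_1(I)$ witnessed by a homomorphism $g$ from $Q_1$ into $I$ with $g \circ H_{Q_1} = t$, I claim $g \circ h$ is a homomorphism from $Q_2$ into $I$ producing $t$. Indeed $(g \circ h)(B_{Q_2}) \subseteq g(B_{Q_1}) \subseteq I$, and $(g \circ h) \circ H_{Q_2} = g \circ H_{Q_1} = t$, so $t \in Q_2(I)$. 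This establishes $Q_1 \sqsubseteq Q_2$.

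The point I expect to be most delicate concerns unsafe queries: if $Q_1$ has a head variable not occurring in $B_{Q_1}$, that variable is not in $\adom{B_{Q_1}}$, so the identity is not literally a homomorphism from $Q_1$ into $B_{Q_1}$ in the paper's strict sense, and in fact the theorem as stated fails in such cases. I would finesse this either by restricting to safe queries (the classical Chandra--Merlin setting, which suffices for the paper's later use) or by tacitly enlarging the canonical instance with a dummy fact bringing each head variable into the active domain, after which the backward argument goes through unchanged since the image of an unsafe head variable under $h$ can still be hit by an appropriate choice of $g$ in any target $I$.
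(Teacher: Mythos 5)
Your proof is the standard canonical-database argument for the Chandra--Merlin homomorphism theorem, and it is correct; the paper itself only recalls this result with a citation and gives no proof, so there is nothing to diverge from. Both directions are handled properly: the forward direction by evaluating the containment at $B_{Q_1}$ viewed as an instance, the backward direction by composing the witnessing homomorphism $h$ with an arbitrary match $g$ of $Q_1$ in $I$. Your caveat about unsafe queries is also genuinely right and worth flagging: since the paper defines homomorphisms as maps into $\adom{I}$ and explicitly permits head variables absent from the body, taking $Q_2 = Q_1$ with an unsafe head already breaks the forward direction (e.g.\ $(A:x)\leftarrow R(y)$ gives $Q_2(B_{Q_1})=\{(A:y)\}\not\ni (A:x)$), so the statement implicitly assumes safety. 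This is harmless for the paper, since every later invocation of the theorem is for queries with empty heads, but your restriction to the safe case is the cleaner of your two proposed fixes.
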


A \emph{boolean query} over a database schema $\dbsc$ is a mapping from
instances of $\dbsc$ to $\{\ttrue,\ffalse\}$. We can associate to any conjunctive query $Q$, a
boolean query $Q \neq \emptyset$, that is $\ttrue$ on $I$ if $Q(I)\neq \emptyset$ and $\ffalse$ if
$Q(I)=\emptyset$. We will write $\lns{\SCON}$ for the family of boolean queries of the form $Q\neq
\emptyset$ where $Q$ is in $\SCON$.

As argued in the introduction, this is not the only natural way to express boolean queries.
Containment statements of the form $Q_1\subseteq Q_2$ provide a clean way to express interesting
nonmonotone boolean queries. Formally, the boolean query $Q_1\subseteq Q_2$ is $\ttrue$ on $I$ if
$Q_1(I)$ is a subset of $Q_2(I)$, and $\ffalse$ on $I$ otherwise. It is understood that we can only take containment boolean queries of two conjunctive queries $Q_1$ and $Q_2$ if they have the same result scheme. We write $\lc{\SCON}$ for the family of boolean queries expressible by containment statements $Q_1\subseteq Q_2$ where $Q_1$ and $Q_2$ are in $\SCON$ with the same result scheme.

Recall that every conjunctive query $Q$ is $\emph{monotone}$, in the sense that for any two instances $I,J$ over $\dbsc$, such that $I\subseteq J$, we have $Q(I)\subseteq Q(J)$. Furthermore, we say that a boolean query $Q$ is \emph{monotone} if for any two instances $I,J$ over $\dbsc$, such that $I\subseteq J$, we have $Q(I)=\ttrue$ implies $Q(J)=\ttrue$. We denote the set of monotone boolean queries with $\MONO$.

We will frequently use the following property of conjunctive queries with connected bodies. If $Q$ is a conjunctive query with a connected body, then $Q(I\cup J)= Q(I)\cup Q(J)$ for any domain-disjoint instances $I$ and $J$. We will refer to this property as the \emph{additivity} property. Furthermore, we say that a query $Q$ is additive if it has the additivity property.
\section{Main result} 
In this section we will prove the main theorem of the present paper. This preservation theorem can be summarized as follows:
\begin{theorem}\label{theo:presv}
 For any database schema $\dbsc{}$, $\lc{\SCON}\cap \MONO= \lns{\SCON}$. Specifically, every
 monotone query $Q_1\subseteq Q_2$, where $Q_1$ and $Q_2$ are CQs, is equivalent to a query of the form $(()\leftarrow B)\neq
 \emptyset$, where $B$ is empty or $B$ consists of some of the connected components of $B_{Q_2}$. 
\end{theorem}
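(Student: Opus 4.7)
The inclusion $\lns{\SCON}\subseteq\lc{\SCON}\cap\MONO$ is immediate: $(()\leftarrow B)\neq\emptyset$ is equivalent to the containment $(()\leftarrow\ttrue)\subseteq(()\leftarrow B)$, and the nonemptiness of a CQ is always monotone. For the converse inclusion, the plan is to fix a monotone $Q_1\subseteq Q_2$ with $H_{Q_1}=(x_1,\ldots,x_k)$ and $H_{Q_2}=(y_1,\ldots,y_k)$, and to split the argument into a trivial branch and a structural one. If $Q_1\sqsubseteq Q_2$ as queries, the boolean query is constantly true and equivalent to $(()\leftarrow\emptyset)\neq\emptyset$, so taking $B=\emptyset$ suffices. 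I therefore assume $Q_1\not\sqsubseteq Q_2$, so by Theorem~\ref{theo:cm} the query is already false on the canonical instance $B_{Q_1}$.

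I will decompose $B_{Q_2}=C_1\uplus\cdots\uplus C_m$ into its connected components and call $C_j$ \emph{reducible} if there exists a homomorphism $h:C_j\to B_{Q_1}$ with $h(y_i)=x_i$ for every head variable $y_i$ occurring in $C_j$, and \emph{irreducible} otherwise. Let $B$ be the union of the irreducible components. My claim is that $Q_1\subseteq Q_2$ is equivalent to $(()\leftarrow B)\neq\emptyset$. The first key step is a structural lemma stating that monotonicity forces every irreducible $C_j$ to be head-variable-free. To prove it, I would assume for contradiction that some irreducible $C_j$ contains a head variable $y_i$, pick a nonempty instance $I_0$ into which $B_{Q_1}$ does not homomorphically map, let $B_{Q_1}'$ be a variable-disjoint renaming of $B_{Q_1}$ whose active domain is disjoint from $\adom{I_0}$, and set $J_0=I_0\cup B_{Q_1}'$. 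Then $Q_1(I_0)=\emptyset$, so the boolean query vacuously holds on $I_0$; but the renamed $H_{Q_1}'$ lies in $Q_1(J_0)$, and any homomorphism $g:B_{Q_2}\to J_0$ agreeing with $H_{Q_1}'$ on heads would, by connectedness, send $C_j$ entirely to the $B_{Q_1}'$-side (because $y_i$ is pinned there), yielding a head-agreeing homomorphism $C_j\to B_{Q_1}$ and contradicting irreducibility. Hence the query fails on $J_0\supseteq I_0$, contradicting monotonicity.

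For sufficiency, if $B$ maps into $I$ via $h$ and $f:B_{Q_1}\to I$ witnesses some $\bar a=f\circ H_{Q_1}\in Q_1(I)$, I will define $g:B_{Q_2}\to I$ component-wise: on each reducible $C_j$ take $g|_{C_j}=f\circ h_{C_j}$ for a witnessing head-agreeing $h_{C_j}$, and on each irreducible $C_j\subseteq B$ take $g|_{C_j}=h|_{C_j}$; the structural lemma rules out any head-variable clash on the latter, so the combined $g$ is a well-defined homomorphism with $g\circ H_{Q_2}=\bar a$, placing $\bar a$ in $Q_2(I)$. For necessity, assuming $Q_1\subseteq Q_2$ holds on $I$, if $B_{Q_1}$ maps into $I$ then the containment directly provides a witness whose restriction to the irreducible components gives the desired homomorphism $B\to I$; otherwise I form $J=I\cup B_{Q_1}'$ disjointly, use monotonicity to transfer the query from $I$ to $J$, apply the previous argument to obtain $g:B_{Q_2}\to J$ agreeing with $H_{Q_1}'$ on heads, and observe that each irreducible $C_j$, being head-free and unable by definition to admit a homomorphism into $B_{Q_1}'$, is forced by connectedness to map into the $I$-side of $J$, so $B$ maps into $I$.

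The hard part will be the structural lemma, and in particular the degenerate case in which no instance $I_0$ as above exists. This occurs exactly when $B_{Q_1}$ admits a homomorphism into every nonempty $\dbsc$-instance (for example when $B_{Q_1}=\emptyset$, or when the schema is minimal enough that every single-fact instance realizes $B_{Q_1}$). I plan to handle this case with a separate short argument showing that such a configuration already forces $Q_1\sqsubseteq Q_2$, thereby collapsing back to the trivial branch $B=\emptyset$.
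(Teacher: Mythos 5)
Your overall architecture is a genuinely different route from the paper's: you decompose $B_{Q_2}$ into connected components, let $B$ collect the ``irreducible'' ones (no head-compatible homomorphism into $B_{Q_1}$), and claim $Q_1\subseteq Q_2$ is equivalent to $(()\leftarrow B)\neq\emptyset$, whereas the paper first erases the heads (Lemma~\ref{lem:contCQMONemptyheads}), minimizes $Q_1$, and then does a case analysis on $|B_{Q_1}|$ and on $|\dbsc|$. For genuinely monotone inputs your $B$ does yield a query equivalent to the paper's answer in every branch, and your sufficiency argument and the first branch of necessity are sound (modulo routine care with head variables absent from the bodies).

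The gap is exactly where you flagged it, and your proposed repair is false. In the degenerate case where $B_{Q_1}$ maps homomorphically into every nonempty $\dbsc$-instance, it is \emph{not} true that $Q_1\sqsubseteq Q_2$ is forced: take $\dbsc=\{R\}$ with $R$ binary, $Q_1=()\leftarrow R(x,y)$ and $Q_2=()\leftarrow R(u,u)$. Then $B_{Q_1}$ maps into every nonempty instance, $Q_1\not\sqsubseteq Q_2$, yet $Q_1\subseteq Q_2$ is the monotone, non-constant query ``there is a self-loop''; the same happens with $B_{Q_1}=\emptyset$ and any nontrivial $Q_2$. So the degenerate branch cannot be collapsed into the trivial one --- it is precisely the branch (covering the whole one-relation-name case and the $B_{Q_1}=\emptyset$ case of the paper's analysis) in which the answer is a genuine nonemptiness query. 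Consequently your structural lemma (irreducible components contain no head variables) is left unproven exactly where its proof is nontrivial, since no seed instance $I_0$ with $Q_1(I_0)=\emptyset$ exists there. The paper's way out is the instance $Z_a$ consisting of one fact $R(a,\ldots,a)$ per relation name: $Q(Z_a)=\{(a,\ldots,a)\}$ for \emph{every} CQ $Q$, so the containment holds on $Z_a$ for free, and $Z_a$ (suitably augmented) can replace your $I_0$ as the base point of the monotonicity argument; this is the engine of Lemma~\ref{lem:contCQMONemptyheads}. You need to import that idea, or something equivalent, to close the degenerate case; as written, the proposal establishes the theorem only when a nonempty $I_0$ with $Q_1(I_0)=\emptyset$ is available.
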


Note that $\lns{\SCON}\subseteq \lc{\SCON}\cap \MONO$ already follows from the fact that $Q \neq
\emptyset$ is equivalent to $()\leftarrow \emptyset \subseteq ()\leftarrow B_Q$~. To prove the remaining inclusion we first establish a few
technical results. First, we show that any monotone containment of conjunctive queries is equivalent
to a containment of conjunctive queries with empty heads. For the remainder of this section, we write $Z_a$ to be the instance where there is exactly one fact $R(a,a,\ldots,a)$ for every $R \in \dbsc$. Note that for every $\CON$ $Q$, we have $Q(Z_a) = \{(a,a,\ldots,a)\}$.

\begin{lemma}\label{lem:contCQMONemptyheads}
   Let $Q_1$ and $Q_2$ be conjunctive queries. If $Q_1\subseteq Q_2$ is monotone, then it is equivalent to the conjunctive query $()\leftarrow B_{Q_1}\subseteq ()\leftarrow B_{Q_2}$.
  \end{lemma}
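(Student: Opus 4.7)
The plan is to prove the two implications of the claimed equivalence of boolean queries separately. Unpacking definitions, $(() \leftarrow B_{Q_i})(I) = \{()\}$ precisely when there is a homomorphism $B_{Q_i} \to I$, i.e., when $Q_i(I) \neq \emptyset$. Hence $(() \leftarrow B_{Q_1}) \subseteq (() \leftarrow B_{Q_2})$ holds on $I$ iff $Q_1(I) = \emptyset$ or $Q_2(I) \neq \emptyset$. The forward direction, from $Q_1(I) \subseteq Q_2(I)$ to this emptied-heads containment, is then immediate and does not use monotonicity: if $Q_1(I) \neq \emptyset$, then $Q_2(I) \neq \emptyset$ as well.

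For the converse I would argue by contradiction. Suppose on some instance $I$ the emptied-heads containment holds but $Q_1(I) \not\subseteq Q_2(I)$, so there is a tuple $\bar a \in Q_1(I) - Q_2(I)$ and, by the emptied-heads hypothesis, a homomorphism $\bar g : B_{Q_2} \to I$ witnessing $Q_2(I) \neq \emptyset$. I would pick a fresh element $b \notin \adom{I}$ and use the remark that $Q_1(Z_b) = Q_2(Z_b) = \{(b,\ldots,b)\}$: this makes the boolean query $Q_1 \subseteq Q_2$ evaluate to $\ttrue$ on $Z_b$, and by monotonicity also on $I \cup Z_b$. Since $\bar a \in Q_1(I) \subseteq Q_1(I \cup Z_b)$, this forces $\bar a \in Q_2(I \cup Z_b)$, and so yields a homomorphism $g : B_{Q_2} \to I \cup Z_b$ with $g(H_{Q_2}) = \bar a$.

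The crucial step is then to repair $g$ into a homomorphism $g' : B_{Q_2} \to I$ still sending $H_{Q_2}$ to $\bar a$, which would contradict $\bar a \notin Q_2(I)$. Because $\adom{I}$ and $\{b\}$ are disjoint, every connected component of $B_{Q_2}$ is mapped by $g$ either entirely into $I$ or entirely into $Z_b$. Every component containing a variable of $H_{Q_2}$ must land in $I$, since the entries of $\bar a$ all lie in $\adom{I}$ while $Z_b$ contributes only $b$. For each remaining component $C$, which contains no head variable, I would replace $g|_C$ by $\bar g|_C$, which lands in $I$; unsafe head variables not appearing in $B_{Q_2}$ keep their $g$-values, which already lie in $\adom{I}$. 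The resulting $g'$ is a homomorphism $B_{Q_2} \to I$ with $g'(H_{Q_2}) = \bar a$, providing the contradiction. The main obstacle will be exactly this final rerouting: one has to verify that head variables genuinely force their components onto the $I$-side and that patching together $g$ on those components with $\bar g$ on the head-free ones produces a consistent homomorphism with the required head value.
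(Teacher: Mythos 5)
Your proof is correct. The two key devices are the same as in the paper's proof --- the one-relation-per-name instance $Z_b$ on a fresh element to convert monotonicity into $Q_1(I\cup Z_b)\subseteq Q_2(I\cup Z_b)$, and the splitting of $B_{Q_2}$ into connected components that must map wholly into $I$ or wholly into $Z_b$ --- but you organize them differently, and more economically. The paper first proves two standalone auxiliary facts (a renaming $h$ with $h\circ \restr{H_{Q_2}}{A_0}=\restr{H_{Q_1}}{A_0}$ for the attributes carried by unsafe head variables of $Q_2$, extracted from a specially built instance $Z_a\cup B_{Q_1}\cup\bigcup_i Z_{H_{Q_1}(i)}$, and the per-component containment $(\star)$ for the head-touching components), and only then patches a homomorphism together from three separate sources. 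You instead fix the offending instance $I$ and tuple $\bar a$, obtain a single homomorphism $g:B_{Q_2}\to I\cup Z_b$ with $g\circ H_{Q_2}=\bar a$, observe that every head-touching component and every unsafe head variable already lands in $I$ under $g$ (because the entries of $\bar a$ lie in $\adom{I}$ and $b$ is fresh), and reroute only the head-free components through the witness $\bar g$ of $Q_2(I)\neq\emptyset$. This makes the consistency issue that the paper's function $h$ is designed to handle (two attributes sharing the same unsafe head variable of $Q_2$ must receive equal values) automatic, since you reuse one and the same $g$ on those variables. The component-wise patching at the end is well defined because the pieces live on disjoint variable sets, so the argument goes through; it is a genuinely leaner rendering of the same idea.
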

  \begin{proof}
   Let $S$ be the result scheme of $Q_1$ and $Q_2$. Write $B_{Q_2}$ as $B_1,\ldots,B_k,B$ where the $B_j$ are the connected components of $B_{Q_2}$ that contain at least one variable in $H_{Q_2}$, and $B$ is the collection of the remaining connected components. 

     Define $A_j = \{i \in S \mid H_{Q_2}(i) \in \dom(B_{j})\}$ for $j=1,\ldots,k$ and let $A_{0}$ contain the remaining attributes in $S$. Furthermore, define $A = \bigcup_{1\leq j \leq k} A_j$.

     We first show that there is a function $f$ such that $f\circ \restr{H_{Q_2}}{A_0}=\restr{H_{Q_1}}{A_0}$. Let $a$ be a fresh data element.
     Define $I = Z_a \cup B_{Q_1} \cup \bigcup_{i\in C} Z_{H_{Q_1}(i)}$ where $C = \{i \in S \mid H_
     {Q_1}(i)\not \in \dom{(B_{Q_1})}\}$. Since, $Q_1(Z_a) = Q_2(Z_a)$ and $Q_1\subseteq Q_2$ is monotone, we have $Q_1(I)\subseteq Q_2(I)$. Therefore, $H_{Q_1}\in Q_2(I)$ since $H_{Q_1}\in Q_1(I)$. Hence, there is a homomorphism $f$ from $Q_2$ into $I$ such that $f\circ H_{Q_2} = H_{Q_1}$. In particular, $f \circ \restr{H_{Q_2}}{A_0}=\restr{H_{Q_1}}{A_0}$ as desired.

    Next, we show for each $j=1,\ldots,k$ that  \[(\restr{H_{Q_1}}{A_j}\leftarrow B_{Q_1})\sqsubseteq (\restr{H_{Q_2}}{A_j}\leftarrow B_{j}).\tag{$\star$}\] 

    Let $I$ be a nonempty instance over $\dbsc$ and let $a$ be a fresh data element. Suppose $t \in (\restr{H_{Q_1}}{A_j}\leftarrow B_{Q_1})(I)$. Since $(\restr{H_{Q_1}}{A_j}\leftarrow B_{Q_1})$ and $Q_1$ have the same body, and $\restr{H_{Q_1}}{A_j}$ is a subtuple of $H_{Q_1}$, we can extend $t$ to $t'$ such that $t' \in Q_1(I)$. Furthermore, since $Q_1\subseteq Q_2$ is monotone and $Q_1(Z_a)= Q_2(Z_a)$, we have $Q_1(I\cup Z_a)\subseteq Q_2(I\cup Z_a)$. Thus, $t'\in Q_2(I\cup Z_a)$, whence we also have $t\in (\restr{H_{Q_2}}{A_j}\leftarrow B_{j})(I \cup Z_a)$. Since $\restr{H_{Q_2}}{A_j}\leftarrow B_{j}$ is additive, $t \in (\restr{H_{Q_2}}{A_j}\leftarrow B_{j})(I)\cup (\restr{H_{Q_2}}{A_j}\leftarrow B_{j})(Z_a)$. This implies that $t \in (\restr{H_{Q_2}}{A_j}\leftarrow B_{j})(I)$ since $t$ is a tuple of data elements in $I$.

 We now show that $Q_1\subseteq Q_2$ is equivalent to $Q_1'\subseteq Q_2'$ where $Q_1' = ()\leftarrow B_{Q_1}$ and $Q_2'=() \leftarrow B_{Q_2}$, which proves our lemma. Clearly, $Q_1(I)\subseteq Q_2(I)$ implies that $Q_1'(I)\subseteq Q_2'(I)$. For the other direction, suppose that $Q_1'(I)\subseteq Q_2'(I)$ and let $t \in Q_1(I)$. Then, we have the following:
\begin{itemize}
  \item There is a homomorphism $f_1$ from $B_{Q_1}$ to $I$ such that $f_1\circ H_{Q_1}=t$.
  \item There is a homomorphism $f_2$ from $B_{Q_2}$ to $I$ since $\emptyset\neq Q_1'(I)\subseteq Q_2'(I)$.
  \item There is a function $h$ such that $h\circ\restr{H_{Q_2}}{A_0} = \restr{H_{Q_1}}{A_0}$. 
  \item For every $j=1,\ldots,k,$ $\restr{t}{A_j}\in (\restr{H_{Q_2}}{A_j}\leftarrow B_{j})(I)$ by $(\star)$.
  Hence, there is a homomorphism $h_j$ from $B_{j}$ into $I$ such that $h_j\circ \restr{H_{Q_2}}{A_j}=\restr{t}{A_j}$.
\end{itemize}

 We now construct a homomorphism $f$ from $Q_2$ into $I$ such that $f\circ H_{Q_2} = t$. 
We define this $f$ as follows:
\[
f: x \mapsto 
  \begin{cases}
  f_2(x), & \text{if $x \in B$};\\
  h_j(x), & \text{if  $x \in \dom{(B_j)}$};\\
  f_1\circ h(x), & \text{otherwise}.  \end{cases}
\]

We first show that $f\circ H_{Q_2}= t$. 

\begin{align*}
f\circ H_{Q_2} &= f\circ (\restr{H_{Q_2}}{A_0}\cup \bigcup_{1\leq j\leq k} \restr{H_{Q_2}}{A_j})\\
      &= f\circ\restr{H_{Q_2}}{A_0} \cup \bigcup_{1\leq j \leq k}{f\circ\restr{H_{Q_2}}{A_j}}\\
      & =f_1\circ h\circ \restr{H_{Q_2}}{A_0} \cup \bigcup_{1\leq j \leq k}{h_j\circ\restr{H_{Q_2}}{A_j}}\\
      & = f_1\circ\restr{H_{Q_1}}{A_0}\cup \bigcup_{1\leq j \leq k}{\restr{t}{A_j}}= \restr{t}{A_0} \cup \bigcup_{1\leq j \leq k}{\restr{t}{A_j}} = t
\end{align*}

Finally, we show that $f(B_{Q_2})\subseteq I$.
 \begin{align*}
 f(B_{Q_2}) = f(B \cup \bigcup_{1\leq j\leq k} B_{j}) &= f(B)\cup \bigcup_{1\leq j\leq k} f(B_{j})\\ 
 &= f_2(B)\cup \bigcup_{1\leq j\leq k} h_j(B_{j}) \subseteq I
 \end{align*}
  \end{proof}
    To prove Theorem~\ref{theo:presv} we may thus limit ourselves to conjunctive queries with empty
    heads. First, we have a look at containments of the form $Q_1\subseteq Q_2$ where $B_{Q_1}$
    contains at least two non-redundant atoms. In what follows, when we write that a conjunctive
    query $Q$ is \emph{minimal}, we mean that $B_{Q}$ does not contain redundant atoms. (An atom in
    $B_Q$ is
    called \emph{redundant} if the query obtained from $Q$ by removing that atom is equivalent to $Q$.)
    \begin{lemma}\label{lem:BQ1two}
      Let $Q_1$ and $Q_2$ be $\CON$s where $Q_1$ is minimal and $H_{Q_1}=H_{Q_2}=()$. If $B_{Q_1}$ contains at least two atoms, then $Q_1\subseteq Q_2$ is equivalent to $\truef$ or is not monotone.
      \begin{proof}
         If $Q_1\subseteq Q_2$ is not equivalent to $\truef$, then $Q_1\not\sqsubseteq Q_2$. Thus,
         $Q_2(B_{Q_1})=\emptyset$ by Theorem~\ref{theo:cm}, whence we have $Q_1(B_{Q_1})\not\subseteq Q_2({B_{Q_1}})$. Since $|B_{Q_1}|\geq 2$, there exists a nonempty $B\subsetneq B_{Q_1}$. We have $Q_1(B)=\emptyset$ for otherwise $Q_1$ would not be minimal.

         Clearly, $Q_1(B)=\emptyset$ implies that $Q_1(B)\subseteq Q_2(B)$. Hence, $Q_1\subseteq Q_2$ is not monotone.
       \end{proof} 
    \end{lemma}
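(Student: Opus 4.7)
The plan is to dichotomize on whether the CQ containment $Q_1 \sqsubseteq Q_2$ holds. If it holds, then by definition the boolean query $Q_1 \subseteq Q_2$ is $\truef$ on every instance, and we are done with the first alternative. So assume $Q_1 \not\sqsubseteq Q_2$. By Theorem~\ref{theo:cm}, $H_{Q_1} \notin Q_2(B_{Q_1})$; since $H_{Q_1}=H_{Q_2}=()$, this means $Q_2(B_{Q_1})=\emptyset$, while $Q_1(B_{Q_1}) = \{()\}$ because the identity is a homomorphism. Hence $Q_1 \subseteq Q_2$ evaluates to $\ffalse$ on the instance $B_{Q_1}$.

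Next, I would exploit $|B_{Q_1}|\geq 2$ to pick a nonempty proper subset $B \subsetneq B_{Q_1}$, and show that $Q_1(B)=\emptyset$. The key step, where minimality is essential, is this: suppose for contradiction $Q_1(B)\neq\emptyset$, i.e., there is a homomorphism $f\colon B_{Q_1}\to B$. Pick any atom $a \in B_{Q_1}\setminus B$. Since $f(B_{Q_1})\subseteq B \subseteq B_{Q_1}\setminus\{a\}$, the map $f$ also witnesses a homomorphism from $B_{Q_1}$ into $B_{Q_1}\setminus\{a\}$. Applying Theorem~\ref{theo:cm} again (with empty heads, so equivalence amounts to mutual homomorphism into the bodies), the query $()\leftarrow B_{Q_1}\setminus\{a\}$ is equivalent to $Q_1$, making $a$ redundant and contradicting minimality of $Q_1$.

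Combining these two observations gives the non-monotonicity witness: on the instance $B$ we have $Q_1(B) = \emptyset \subseteq Q_2(B)$, so the containment evaluates to $\truef$; on the larger instance $B_{Q_1}\supsetneq B$ it evaluates to $\ffalse$. Thus $Q_1\subseteq Q_2$ fails to be monotone.

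The only delicate point is the argument that minimality forces $Q_1(B) = \emptyset$ for every nonempty proper $B \subsetneq B_{Q_1}$; the rest is bookkeeping with Chandra–Merlin. Once one spots that a homomorphism into $B$ automatically gives a homomorphism into $B_{Q_1}\setminus\{a\}$ for any $a \notin B$, the contradiction with minimality is immediate, and the non-monotonicity witness is the pair $(B, B_{Q_1})$.
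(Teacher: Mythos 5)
Your proposal is correct and follows the same route as the paper's proof: use Chandra--Merlin to get $Q_2(B_{Q_1})=\emptyset$, take a nonempty proper subset $B\subsetneq B_{Q_1}$, and use minimality to conclude $Q_1(B)=\emptyset$, yielding the non-monotonicity witness $(B, B_{Q_1})$. You merely spell out the minimality step (a homomorphism into $B$ would make any atom of $B_{Q_1}\setminus B$ redundant), which the paper leaves implicit.
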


  We are now ready to prove Theorem~\ref{theo:presv}.
  \begin{proof}[Proof of Theorem~\ref{theo:presv}]
    Let $Q_1\subseteq Q_2$ be in $\lc{\SCON}\cap \MONO$. We want to show that $Q_1\subseteq Q_2$ is
    equivalent to $(()\leftarrow B)\neq \emptyset$ where $B$ is empty or $B$ consists of some of the
    connected components of $B_{Q_2}$.

    By Lemma~\ref{lem:contCQMONemptyheads} we may assume that $H_{Q_1}=H_{Q_2}=()$. We may
    furthermore assume that $Q_1$ is minimal. The constant $\truef$ query is expressed by $
    ()\leftarrow \emptyset\neq \emptyset$, so we may assume that $Q_1\not\sqsubseteq Q_2$. Thus,
    $Q_2(B_{Q_1})=\emptyset$ by Theorem~\ref{theo:cm}.

    If $B_{Q_1}$ contains at least two atoms, then $Q_1\subseteq Q_2$ is equivalent to $\truef$ by Lemma~\ref{lem:BQ1two}.

    If $B_{Q_1} = \emptyset$, then $Q_1\subseteq Q_2$ is equivalent to $Q_2\neq \emptyset$ which is in $\lns{\SCON}$.

    Finally, suppose that $B_{Q_1}$ contains exactly one atom. First, let us consider $B_{Q_1} = \{R(x_1,\ldots,x_n)\}$ where there is a repetition among $x_1,\ldots,x_n$. Define $I_1 = \{R(y_1,\ldots,y_n)\}$ where $y_1,\ldots,y_n$ are all different and not equal to any of $x_1,\ldots,x_n$. Clearly, $Q_1(I_1)=\emptyset$. 
    Since $Q_2(B_{Q_1})=\emptyset$, there is a connected component $C$ of $B_{Q_2}$ that does not map in $B_{Q_1}$. Furthermore, $C$ does not map into $I_1$ either, whence we also have $Q_2(I_1)= \emptyset$. Indeed, if $C$ would map into $I_1$, then $C$ would also map into $B_{Q_1}$ since $I_1$ maps into $B_{Q_1}$. It follows that $C$ does not map into $I_1\cup B_{Q_1}$ either, since $C$ is connected and $\adom{I_1}$ is disjoint from $\adom{B_{Q_1}}$. Therefore, $Q_2(I_1\cup B_{Q_1})=\emptyset$. Hence, $Q_1(I_1\cup B_{Q_1})\not\subseteq Q_2(I_1\cup B_{Q_1})$ since the head of $Q_1$ is in $Q_1(I_1\cup B_{Q_1})$. This contradicts that $Q_1\subseteq Q_2$ is monotone, since $Q_1(I_1)=\emptyset\subseteq Q_2(I_1)$. 

    So, the only body left to consider is $B_{Q_1}=\{R(x_1,\ldots,x_n)\}$ where $x_1,\ldots,x_n$ are
    all different and $R \in \dbsc$. Our proof now depends on the size of $\dbsc{}$.
  \begin{enumerate}
    \item Suppose that $\dbsc$ only contains the relation name $R$.  Then $Q_1(I)\neq \emptyset$ for
    any instance $I$ over $\dbsc{}$ since $B_{Q_1}=\{R(x_1,\ldots,x_n)\}$ where $x_1,\ldots,x_n$ are
    all different. Since $Q_1$ and $Q_2$ have empty heads, we may thus conclude that $Q_1\subseteq Q_2$ is equivalent to $Q_2\neq \emptyset$ in $\lns{\SCON}$.
    \item Suppose that $\dbsc$ only contains $R$ and exactly one other relation name $T$. Define $I_1 = \{T(y_1,\ldots,y_m)\}$ where $y_1,\ldots,y_m$ are different from each other and from $x_1,\ldots,x_n$. Since the body of $Q_1$ is an $R$-atom and $I_1$ only contains a $T$-atom, we have $Q_1(I_1)=\emptyset$. Hence, $Q_1(I_1)\subseteq Q_2(I_1)$. By the monotonicity of $Q_1\subseteq Q_2$, we also have $Q_1(I_1\cup B_{Q_1})\subseteq Q_2(I_1\cup B_{Q_1})$. Therefore, every connected component of $B_{Q_2}$ maps in $I_1$ or $B_{Q_1}$. Indeed, $Q_2(I_1\cup B_{Q_1})\neq \emptyset$ since the head of $Q_1$ is in $Q_1(I_1\cup B_{Q_1})$. This observation partitions the connected components of $B_{Q_2}$ into two sets $B'$ and $B''$, where $B'$ contains the components that map into $I_1$, and $B''$ contains the components that map into $B_{Q_1}$.   

    We now show that $Q_1\subseteq Q_2$ is equivalent to $Q' = () \leftarrow B'\neq \emptyset$. To this end, suppose that $Q'(I)\neq \emptyset$ and $Q_1(I)\neq \emptyset$ for some instance $I$ over $\dbsc{}$. Thus $B'$ and $B_{Q_1}$ map into $I$. Since $B''$ maps into $B_{Q_1}$ by construction, we also have that $B''$ maps into $I$. Hence, $Q_2(I)\neq \emptyset$ as desired. For the other direction, suppose that $Q_1(I)\subseteq Q_2(I)$ for some instance $I$ over $\dbsc{}$. If $Q_1(I)\neq \emptyset$, then $Q_2(I)\neq \emptyset$ by assumption. Clearly, $Q'(I)\neq \emptyset$ since $B_{Q'}$ is a subset of $B_{Q_2}$. On the other hand, if $Q_1(I)=\emptyset$, then $I$ has no $R$-facts. Since instances cannot be empty, it must contain at least one $T$-fact, so $I_1$ maps into $I$. Thus $B'$ also maps into $I$, whence $Q'(I)\neq \emptyset$ as desired.
    \item Finally, suppose that $\dbsc{}$ contains at least three relation names.
    Since $Q_2(B_{Q_1})=\emptyset$, there is a connected component $C$ of $B_{Q_2}$ that does not
    map into $B_{Q_1}$. In particular, we know that $C$ is not empty, whence it contains at least
    one atom, say a $T$-atom. (Note that $T$ might be equal $R$.) Since there are three relation names in $\dbsc{}$ there is at least one other relation name $S$ in $\dbsc{}$ that is not equal to $T$ or $R$. Define $I_2= \{S(z_1,\ldots,z_l)\}$ where  $z_1,\ldots,z_l$ are all different from each other and from $x_1,\ldots,x_n$. By construction, $C$ do not map into $I_2$ either, since $C$ contains an atom different from $S$. Thus, $Q_2(I_2\cup B_{Q_1})=\emptyset$, whence we have $Q_1(I_2\cup B_{Q_1})\not\subseteq Q_2(I_2\cup B_{Q_1})$ since $Q_1(I_2\cup B_{Q_1})\neq \emptyset$. However, $Q_1(I_2)=\emptyset$ since $R$ and $S$ are different, which implies that $Q_1(I_2)\subseteq Q_2(I_2)$. This contradicts the assumption that $Q_1\subseteq Q_2$ is monotone.
       \end{enumerate}
  \end{proof}
The proof of Theorem~\ref{theo:presv} gives us a procedure for deciding monotonicity for
containments of CQs.
\begin{corollary}
  Deciding whether a containment in $\lc{\SCON}$ is monotone is NP-complete.
  \begin{proof}
  Let $Q_1\subseteq Q_2$ be in $\lc{\SCON}$. By Lemma~\ref{lem:contCQMONemptyheads}
    we may remove the head variables of $Q_1$ and $Q_2$. The NP-hardness of our problem is taken
    care of by Lemma~\ref{lem:BQ1two}. Indeed, when $B_{Q_1}$ contains at least
    two non-redundant body atoms, the problem
    is equivalent to deciding $Q_1\sqsubseteq Q_2$, which is known to be NP-hard~\cite{cm}.

    Let us now show that the problem is in NP. By the proof of Theorem~\ref{theo:presv} we have the
    following cases when $Q_1$ is minimal:
    \begin{itemize}
      \item If $B_{Q_1}=\emptyset$, then $Q_1\subseteq Q_2$ is always monotone.
       \item If $|B_{Q_1}| \geq 2$, then $Q_1\subseteq Q_2$ is monotone if and only if $Q_1\sqsubseteq
       Q_2$ (Lemma~\ref{lem:BQ1two}).
       \item If $Q_1=\{R(x_1,\ldots,x_n)\}$ where there is a repetition among 
    $x_1,\ldots,x_n$, then $Q_1\subseteq
    Q_2$ is monotone if and only if $Q_1\sqsubseteq Q_2$.
      \item If $B_{Q_1} = \{R
    (x_1,\ldots,x_n)\}$ where $x_1,\ldots,x_n$ are all different, then:
    \begin{enumerate}
       \item[$(a)$] If $|\dbsc{}|=1$, then $Q_1\subseteq Q_2$ is always monotone;
       \item[$(b)$] If $|\dbsc{}|=2$, then $Q_1\subseteq Q_2$ is
    always monotone;
    \item[$(c)$] If $|\dbsc{}|\geq 3$, then $Q_1\subseteq Q_2$ is monotone if and only if $Q_1\sqsubseteq
    Q_2$.
     \end{enumerate} 
     \end{itemize}
     These properties suggest the following algorithm:
     \begin{enumerate}
      \item  Check if $B_{Q_1}=\emptyset$; if so, accept;
       \item Check if $Q_1\sqsubseteq Q_2$; if so, accept;
       \item Non-deterministically pick an atom $R(x_1,\ldots,x_n)$ in $B_{Q_1}$;
       \item Check the following:
       \begin{itemize}
         \item $()\leftarrow R(x_1,\ldots,x_n) \sqsubseteq Q_1$;
         \item $x_1,\ldots,x_n$ are all different.
       \end{itemize}
       \item Accept if $|\dbsc{}| \leq 2$ and the two checks above succeed; otherwise reject.
     \end{enumerate}
   The containment checks ($\sqsubseteq$) are well known to be in NP~\cite{cm}, so this algorithm is
   an NP algorithm.

If the algorithm accepts in step 1, then $Q_1\subseteq Q_2$ is equivalent to $Q_2\neq \emptyset$,
which is monotone. If the algorithm accepts in step 2, then the query $Q_1\subseteq Q_2$ is the constant
$\truef$ query, whence is trivially monotone. If the algorithm accepts in step 5, then the query $
()\leftarrow R(x_1,\ldots,x_n)$ is equivalent to $Q_1$, which is clearly minimal. Hence, by cases $
(a)$ and $(b)$ in the above properties, $Q_1\subseteq Q_2$ is monotone.

Conversely, suppose that $Q_1\subseteq Q_2$ is monotone. If $B_{Q_1}=\emptyset$ or $Q_1\sqsubseteq Q_2$, then
the algorithm accepts in step 1 or 2 respectively. Otherwise, consider a CQ $Q_1'$ obtained from $Q_1$ by omitting all
redundant atoms. Certainly, $Q_1'$ is minimal. Since $Q_1'\subseteq Q_2$ is monotone and
$Q_1'\not\sqsubseteq Q_2$, the above properties imply that $B_{Q_1'}$ consists of a single atom $R
(x_1,\ldots,x_n)$ where $x_1,\ldots,x_n$ are all different, and moreover that $|\dbsc{}| \leq 2$.
Hence, by picking this atom in step 3, the algorithm will accept. 
  \end{proof}
\end{corollary}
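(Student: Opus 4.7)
The plan is to prove NP-completeness in two parts: NP-hardness by reduction from CQ containment, and NP-membership via a non-deterministic algorithm driven by the structural case analysis in the proof of Theorem~\ref{theo:presv}.

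For hardness, I would appeal to Lemma~\ref{lem:BQ1two}: when $Q_1$ is minimal, has empty head, and $|B_{Q_1}| \geq 2$, monotonicity of $Q_1 \subseteq Q_2$ is equivalent to $Q_1 \sqsubseteq Q_2$. Since CQ containment is NP-hard~\cite{cm}, and since heads can always be stripped via Lemma~\ref{lem:contCQMONemptyheads} and a body padded with a fresh non-redundant atom if necessary, this yields NP-hardness of the monotonicity problem by a straightforward reduction.

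For membership, the proof of Theorem~\ref{theo:presv} gives a complete classification of when a containment $Q_1 \subseteq Q_2$ with empty heads and $Q_1$ minimal is monotone: either $B_{Q_1} = \emptyset$, or $Q_1 \sqsubseteq Q_2$, or $B_{Q_1}$ consists of a single atom $R(x_1,\ldots,x_n)$ with pairwise distinct variables and $|\dbsc| \leq 2$. The main obstacle is that one cannot minimize $Q_1$ in polynomial time, so the algorithm cannot directly inspect the minimized body to decide which case applies.

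To overcome this, the algorithm would proceed non-deterministically: first reduce to empty heads via Lemma~\ref{lem:contCQMONemptyheads}; accept if $B_{Q_1} = \emptyset$, or if an NP containment check verifies $Q_1 \sqsubseteq Q_2$; otherwise guess an atom $R(x_1,\ldots,x_n) \in B_{Q_1}$, use a further CQ containment check to certify $()\leftarrow R(x_1,\ldots,x_n) \sqsubseteq Q_1$ (which witnesses that this single atom is equivalent to $Q_1$), verify that its variables are pairwise distinct, and accept precisely when $|\dbsc| \leq 2$. Every check lies in NP, and the accepting branches correspond bijectively to the cases identified in Theorem~\ref{theo:presv}: any monotone $Q_1 \subseteq Q_2$ not handled in the first two steps must, after removal of redundant atoms, reduce to a single-atom query with distinct variables over a schema of size at most two, which the algorithm can discover by guessing the surviving atom.
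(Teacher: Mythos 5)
Your proposal matches the paper's proof essentially step for step: NP-hardness via Lemma~\ref{lem:BQ1two} reducing from CQ containment, and NP-membership via the same non-deterministic algorithm that strips heads, handles the $B_{Q_1}=\emptyset$ and $Q_1\sqsubseteq Q_2$ cases directly, and otherwise guesses a single atom whose equivalence to $Q_1$ is certified by the containment check $()\leftarrow R(x_1,\ldots,x_n)\sqsubseteq Q_1$, accepting exactly when the variables are distinct and $|\dbsc{}|\leq 2$. You correctly identify the key obstacle (minimization is not polynomial) and its resolution (guessing the surviving atom), so nothing further is needed.
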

\section{Future Work}
There are several directions for future work. In this paper, conjunctive queries are not allowed to
have constants in the head and/or body. Our proof method does not work in the presence of
constants. Whether our characterization still holds in this case is still open. 

 Now that we have a syntactical characterization for monotone $\lc{\CON}$ we can look at other query languages. The first languages that come to mind are conjunctive queries with nonequalities, or negation, or unions. Another interesting language to consider is the more expressive first-order logic. When we allow infinite instances, the monotone first-order boolean queries are characterized by the positive first-order sentences with nonequalities~\cite{benedikt_book}. Whether this characterization still holds in restriction to finite instances remains open.

Another interesting line of work is to consider preservation theorems for other semantical properties, e.g., additivity. It can readily be verified that the additive queries in $\lns{\CON}$ are exactly those with connected bodies. Another example of a preservation theorem for additivity is: connected Datalog$^{\neg}$ captures the additive Datalog$^{\neg}$ queries under stratified semantics~\cite{frank_datalogcomponents_journal}.
\bibliographystyle{plain}

\end{document}